\newtheorem{lemma}{Lemma}
\newtheorem{theorem}{Theorem}
\newtheorem{condition}{Condition}
\newtheorem{definition}{Definition}
\title{Risk-Sensitive Optimal Control of Queues}
\author{Rahul~Singh, Xueying Guo, and Eytan Modiano
        %and~P.R.~Kumar% <-this % stops a space
%}
\thanks{Rahul Singh and Eytan Modiano are with the Laboratory of Information and Decision Systems (LIDS), Massachusetts Institute of Technology, Cambridge, MA 02139, USA.
        {\tt\small rsingh12@mit.edu, modiano@mit.edu}}%
\thanks{ Xueying Guo is postdoctoral researcher at Computer Science Department, University of California, Davis
		{\tt\small xyguo@ucdavis.edu}}
		}
\begin{document}
\maketitle
\begin{abstract}
We consider the problem of designing risk-sensitive optimal control policies for scheduling packet transmissions in a stochastic wireless network. A single client is connected to an access point (AP) through a wireless channel. Packet transmission incurs a cost $C$, while packet delivery yields a reward of $R$ units. The client maintains a finite buffer of size $B$, and a penalty of $L$ units is imposed upon packet loss which occurs due to finite queueing buffer.   

We show that the risk-sensitive optimal control policy for such a simple set-up is of threshold type, i.e., it is optimal to carry out packet transmissions only when $Q(t)$, i.e., the queue length at time $t$ exceeds a certain threshold $\tau$. It is also shown that the value of threshold $\tau$ increases upon increasing the cost per unit packet transmission $C$. Furthermore, it is also shown that a threshold policy with threshold equal to $\tau$ is optimal for a set of problems in which cost $C$ lies within an interval $[C_l,C_u]$. Equations that need to be solved in order to obtain $C_l,C_u$ are also provided. 
%The value of the threshold $\tau$, and risk-sensitive average cost can be obtained by solving two equations. %We then address the more complex problem in which $N>1$ clients share a single AP, and show that a client-by-client risk-sensitive optimal policy is optimal. Thus, the results of single client set-up can be utilized in order to schedule multiple clients, and the optimal policy can be computed in time that is linear in $N$.
\end{abstract}
\section{Introduction}
In this work we consider the risk-sensitive optimal control of a one-hop \emph{stochastic} wireless network that comprises of a single client. Networked control systems are becoming increasingly susceptible to attacks~\cite{cardenas2008research}, and tools such as risk-sensitive and robust control can play an important role in securing these systems. Employement of a risk-sensitive control policy can serve as a mechanism to protect the network against attacks such as denial-of-service attacks.

Consider a denial-of-service attack carried out by a stochastic adversary that expends power in order to jam the communication channel between the client and the AP. Utilizing a risk-sensitive network control policy will make the closed-loop system more robust to the errors in the modelling assumptions made on the adversarial attack. The risk-sensitive optimal control policy hedges against the uncertainty by placing a greater emphasis on system trajectories that incur higher operation costs. If $c(t),t=1,2,\ldots,T$ denotes the instantaneous cost incurred during time $t$, then the risk-sensitive cost with risk-sensitivity parameter $\gamma>0$ incurred during time period $T$ is given by
\begin{align*}
\mathbb{E}e^{\gamma \sum_{t=1}^{T}c(t)},
\end{align*}
where expectation is taken with respect to the arrival process, the control policy used for scheduling packets, and the departure process. 
In the large-risk limit, i.e., $\gamma\to\infty$, the risk sensitive cost approaches the minimax cost objective, see~\cite{coraluppi1999risk}. Since the minimax objective seeks to minimize the system cost for the worst case scenario, a risk sensitive controller designed with risk parameter $\gamma$ set to a large value, has a good performance in case the system dynamics are ``adversarial" in nature. The framework provides flexibity by allowing the network operator to choose between the two competing objectives of having low risk-neutral cost, and that of making the system safe against attacks by tuning the risk-sensitivity parameter $\gamma$. Risk-sensitive control theory builds upon the ideas of Dynamic games and robust control~\cite{James1992,Jacobson1973,bensoussan1985optimal,james1994risk} and allows the system operator to generate control actions that reflect his confidence about the uncertainty in the model of the attack. It also generlizes the risk neutral approach towards dynamic optimization~\cite{Marcus1997}. Risk-sensitive control approach provides a link between the stochastic and deterministic approaches to model system uncertainty~\cite{Fleming1997,fleming1995risk}.

%In the limit $\gamma\to\infty$, the risk sensitive controller corresponds to $\mathcal{H}_\infty$ controller, i.e., it optimizes the minimax objective. 

Risk-sensitive optimization places emphasis on higher order moments of the system cost~\cite{kumar1981optimal}, and thus risk-sensitive optimal control reduces undesirable stochastic variations in the system performance. This is highly desirable for network control systems in which the control loop is closed over stochastic communication networks~\cite{rahul,rahul1,guosingh,singh_stolyar_info,rs,singh2017optimal,singh2016throughput}. Risk sensitive system cost takes into account higher order moments of the (random) cost as well, as opposed to the risk neutral cost objective which only inlcudes the mean cost. Since risk sensitive cost objective penalizes higher order moments, it allows for designing a finer controller for the cost of interest. 

We discuss past works dealing with results on risk-sensitive control, and their applications in security of network control systems in Section~\ref{sec:pw}. The set-up involving single client being served by an access point is introduced in Section~\ref{sec:smsc}. We derive the structure of the optimal policy for single client scheduling problem in Section~\ref{sec:struc}. Section~\ref{sec:opt} derives the set of transmission costs for which threshold policy with threshold equal to $\tau$ is optimal. 
% discusses the problem of controlling service to multiple clients sharing a single AP, and 
 Section~\ref{conclu} discusses directions for future research, and also summarizes the key results of this paper. 
 \begin{figure}
 \includegraphics[scale=.38]{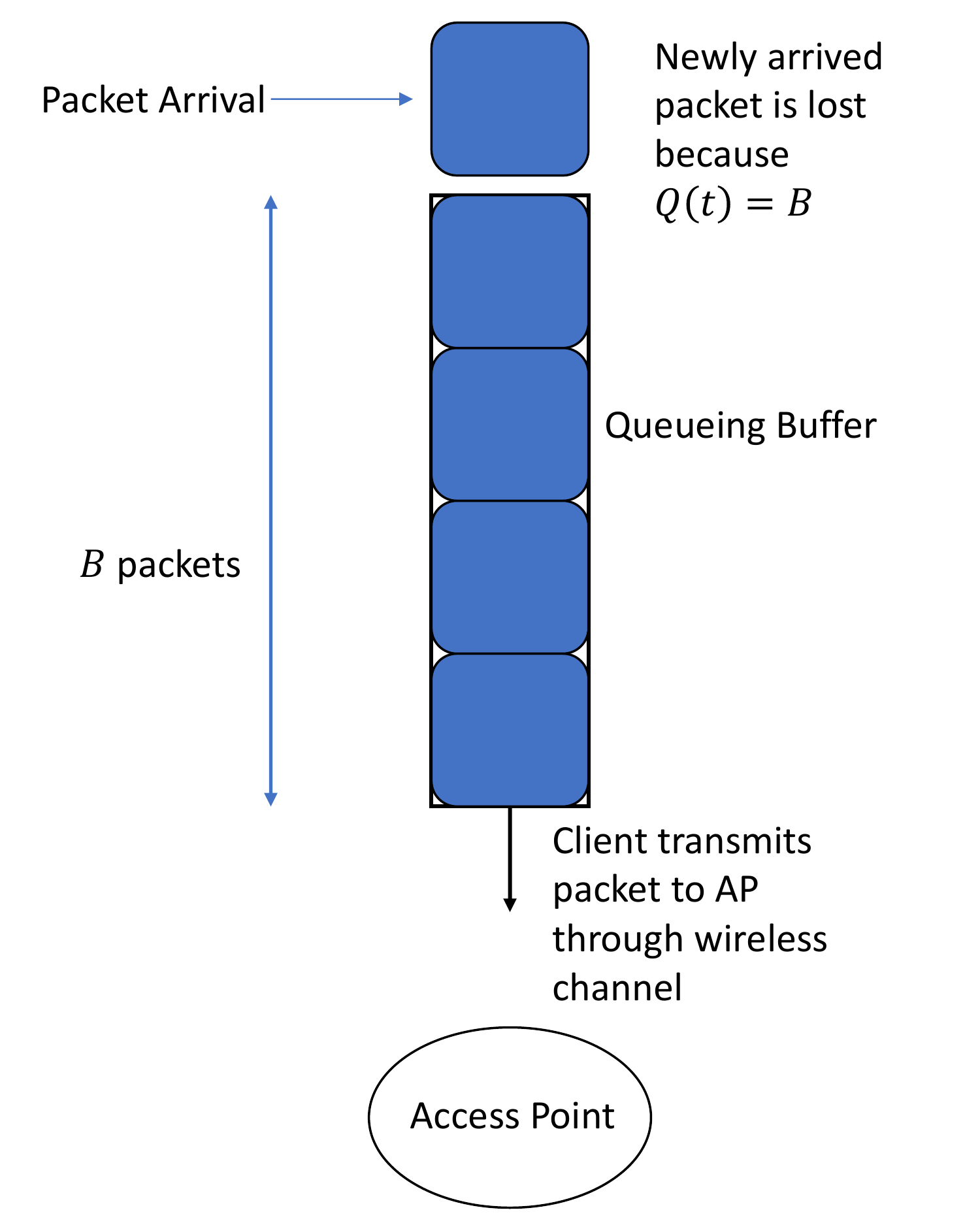}
 %\vspace{-.5in}
 \caption{A single client uses the stochastic wireles channel for carrying out packet transmissions. Size of the queueing buffer is equal to $B$ packets, and if a packet arrives at a time $t$ when the queue length $Q(t)=B$, then it is lost.}
 \label{f1}
 \end{figure}
 \section{Past Works}\label{sec:pw}
The work~\cite{Howard1972} is one of the first to consider the problem of dynamic optimization of risk-sensitive cost within the Markov Decison Process (MDP) framework. For linear systems driven by Gaussian noise and quadratic one-step cost,~\cite{Jacobson1973} shows that the risk-sensitive controller depends upon the variance of noise, which is unlike the case of risk-sensitive LQG control. For a detailed treatment of risk-sensitive control of LQG systems, see~\cite{whittle1990risk}. Results concerned with risk-sensitive control of finite-state discrete-time controlled Markov chains can be found in~\cite{coraluppi1999risk}, while~\cite{Marcus1997} provides an overview of key results in risk-sensitive control.  

In recent years, the problem of desgning protocols and control policies for networked systems and crucial infrastructure such as sensor networks, electric power grids etc. has gained much attention~\cite{bompard2009risk,ericsson2007toward}. 
\cite{amin2009safe} considers the design of risk sensitive controller for a networked control system that is susceptible to denial-of-service attacks. The dynamical system of interest is assumed to be linear. \cite{befekadu2011risk} studies risk-sensitive control in the context of denial of service attacks in network.~\cite{singh2015index} derives scheduling policies that perform a mean versus variance trade-off with respect to packet interdelivery times.

Existing literature on stochastic control of queueing networks has mainly focused on risk-neutral cost objective. Works such as~\cite{1102957,buyukkoc1985c,lin1984optimal,stidham1989monotonic,george2001dynamic,harisondiscount} have derived optimal control policy and its structure under various assumptions regarding the stochastic queueing network. However there seems to be a gap with regards to the design of risk-sensitive control in the context of queueing networks.  
\section{Single Client Scheduling Problem}\label{sec:smsc}
We begin by describing the risk-sensitive queue control problem involving a single client being served by an unreliable channel. 

\emph{Continuous Time Model} 
The system begins operation at time $t=0$, and the packet arrivals to the client are governed by a Poisson process with rate $\lambda$. Let $Q(t),t\geq 0$ denote the queue length of the buffer at time $t$. If the client decides to carry out packet transmission at time $t$, then the time taken to complete packet transmission is exponentially distriibuted with mean $1/\mu$. During the time of packet transmission, cost is incurred at the rate of $C$ units per unit time. The cost $C$ models the amount of power utilized for packet transmission through the wireless medium. A reward of 
$R$ units is generated upon a successful packet transmission, or equivalently the delivered packet is counted towards the network throughput~\cite{tassi1}. The client maintains a queueing buffer of size $B$ packets. A packet loss occurs at time $t$ if a packet arrives and $Q(t)=B$, i.e. the queue buffer is full. The system is penalized $L$ units upon a packet loss.  

\emph{Equivalent Discrete-Time Model} The continuous-time discrete space Markov process described above can be converted into an equivalent discrete-time Markov chain by sampling the embedded Markov chain at time epochs when a packet arrival or departure occurs. Such technique is commonly utilized in the analysis of queueing systems, see ~\cite{1102957} or Ch:10 of~\cite{sennott2009stochastic} for a detailed discussion. We now describe the discrete-time system in detail.
 %Assume that in the original system, when the transmitter is idle, then it is transmitting a fictuous packet, delivery of which produces no reward. The resulting continuous-time process can now be sampled at time instants when there are packet arrivals or departures (real or fictuous) from the queue in order to obtain a discrete-time embedded Markov chain. 

Let $Q(t)$ denote the queue length of the buffer at time $t$. The queue length $Q(t)$ of the client evolves over discrete time-slots $t=1,2,\ldots$. At each time $t=1,2,\ldots$, the client can choose to either attempt packet transmission, i.e., $U(t)=1$, or stay idle $U(t)=0$. If $Q(t)>0$ and the client attempts a packet transmission at time $t$, then the queue length at time $t+1$ is equal to $Q(t)-1$ with a probability $p$, while it is equal to $(Q(t)+1)\wedge B$ with a probability $1-p$.
The quantity $p$ is equal to the probability with which the packet transmission completes before a new packet arrives in the original continuous-time model and is equal to $\mu/(\lambda+\mu)$. The client is charged $C>0$ units for attempting to transmit packet, and is provided a reward of $R>0$ units upon successful packet delivery. 

If at time $t$ either the client decides to not carry out packet transmission, or if $Q(t)=0$, then the queue length $Q(t+1)$ is equal to $(Q(t)+1)\vee B$ with probability $1$. If an arriving packet at time $t$ finds the queueing buffer full, i.e., $Q(t)=B$, then the packet is lost and the system is penalized $L>0$ units. Figure~\ref{f1} depicts the wireless network of interest.
%If While transmitting the packet, energy is spent at the rate of $C$ units per time. Upon a successful packet transmission, the client earns a reward of $R$ units.
%queue length $Q(t)$ of the buffer at time $t$. 
%At each time $t$, the AP applies the control action $U(t)\in\{0,C\}$, where $U(t)=C$ corresponds to scheduling a packet transmission on the stochastic link, and $U(t)=0$ corresponds to staying idle and not transmitting packet. 
A history dependent scheduling policy $\pi$, for each time $t=1,2,\ldots$ maps the history of the system until time $t$ to an action $U(t)\in\{0,1\}$. A Markov policy $\pi$ maps the queue length $Q(t)$ at time $t$ to a decision $U(t)\in \{0,1\}$. The infinite-horizon risk-sensitive cost incurred by the system is equal to
\begin{align}\label{rsmdp}
\min \limsup_{T\to\infty}\frac{1}{\gamma T}\log \mathbb{E} \left\{\exp\gamma\left(\sum_{t=1}^{T} CU(t) - R(t)+L(t)\right) \right\},
\end{align}
where the random process $R(t)$ assumes the value $R$ if a packet is deliverd at time $t$, while is $0$ otherwise, and the process $L(t)$ assumes the value $L$ if a packet is lost at time $t$, and is $0$ otherwise. The parameter $\gamma>0$ controls the sensitivity of the client towards the risk, and is called risk-sensitivity parameter~\cite{whittle1990risk,kumar1981optimal}. If for any Markov policy $\pi$, the process $Q(t)$ is irreducible and aperiodic, the $\limsup$ in the above definition can be replaced by $\lim$~\cite{borkar2002q}. We briefly discuss the existing results on infinite horizon risk-sensitive control for finite-state Markov chains.

\emph{Results on Infinite Horizon Risk-Sensitive Control}
Let us denote by $\pi^\star$ the policy that is optimal for the risk-sensitive MDP~\eqref{rsmdp}.
It can be shown that (\cite{Rojas1998,Marcus1997,borkar2002q}) there exists a value function $V: [0,B]\mapsto \mathbb{R}$, and a scalar $\alpha$, such that
\begin{align}\label{fp}
\alpha V(i) = \min_{u\in \{0,1\}} \sum_{j\in [0,B]} e^{Co(i,j,u)}p(j|i,u)V(j), i\in[0,B]
\end{align}
where $p(j|i,u)$ is the transition probability associated with state $i$ to state $j$ under the application of control action $u$, and $Co(i,j,u)$ is the one-step cost associated with the state-action pair $(i,a)$ and transition to state $j$. $\pi^\star(i)$ corresponds to the action $u$ that minimizes the r.h.s. in the above equation for evaluation of $V(i)$.

\emph{Relative Value Iteration Algorithm} The fixed point equation~\eqref{fp} can be solved by carrying out the following fixed point iterations. Denote the estimate of the value function at iteration $k$ by $V_k$. Then, the value function is updated according to
\begin{align}\label{rvi1}
\tilde{V}_{k+1}(i) = \min_{u\in \{0,1\}} \sum_{j\in [0,B]} e^{C(i,j,u)}p(j|i,u)V_k(j), i\in[0,B].
\end{align} 
Thereafter normalize the iterates so that,
\begin{align}\label{rvi2}
V_{k+1} (i)= \frac{\tilde{V}_{k+1}(i)}{\tilde{V}_{k+1}(0)},\forall i\in[0,B].
\end{align}
The policy generated at iteration $k$ by the RVI algorithm applies the action that minimizes the r.h.s. of~\eqref{rvi1}. It can be shown that for the RVI iterations, we have that $V_k\to V$, thereby yielding optimal policy~\cite{borkar2002q}. Throughout, for $m\leq n$, we denote by $[m,n]$ the set $\{m,m+1,\ldots,n\}$. 
\section{Structure of the Optimal Policy}\label{sec:struc}
We will show that the optimal policy for the single client scheduling problem is of threshold-type, i.e. it is optimal to carry out packet transmissions only when the queue length $Q(t)$ exceeds a certain threshold $\tau$. The value of threshold $\tau$ depends on the system parameters $p$, and transmission cost $C$. We also show that $\tau$ increases with $C$. 
\begin{definition}
A threshold policy with threshold $\tau$, denoted as $\pi_{\tau}$ schedules packet transmissions at time $t=1,2,\ldots$ only if the queue length $Q(t)\geq \tau$.
\end{definition}
The Relative Value Iteration (RVI) algorithm discussed in the previous section converges, thus yielding optimal policy $\pi^\star$. We will show that at each iteration of the RVI algorithm, the produced policy is of threshold policy. This will prove that the optimal policy is of threshold-type.

Let $V_k$ denote the value function at iteration $k$ of the RVI algorithm. Thus, $V_k(n)$ denotes the relative cost associated with system state being in state $n$.
Let $J_{k+1}(n,1),J_{k+1}(n,0)$ denote the costs associated with applying the actions $U(k+1)=1$ and $U(k+1)=0$ respecively when the system is in state $n$ at stage $k+1$ of the RVI algorithm, i.e., 
\begin{align}\label{qfac}
J_{k+1}(n,0) = 
\begin{cases}
V_k(n+1), n\in [0,B-1]\\
e^{\gamma L}V_k(n), \mbox{ if } n=B,
\end{cases}
\end{align}
\begin{align}
J_{k+1}(n,1) = 
\begin{cases}
e^{\gamma C} V_{k}(n+1), \mbox{ if } n =0,\\
pe^{\gamma (C-R)} V_k(n-1)+ (1-p) e^{\gamma C} V_{k}(n+1), \\
\mbox{ if } n\in [1,B-1],\\
pe^{\gamma (C-R)} V_k(n-1)+ (1-p) e^{\gamma (C+L)} V_{k}(n), \\
\mbox{ if } n = B.
\end{cases}
\end{align}
Let $\partial J_{k+1}(n): =J_{k+1}(n,0)-J_{k+1}(n,1)$ denote the differential between the costs associated with taking the actions $0$ and $1$ if the queue length $Q(k+1)$ at iteration $k+1$ is equal to $n$. The differential $\partial J_{k+1}$ is given as,
\begin{align}\label{df}
\partial J_{k+1}(n)=
\begin{cases}
(1-e^{\gamma C})V_{k}(n+1)\mbox{ if } n=0,\\
V_k(n+1)\left[1-(1-p)e^{\gamma C}\right]\\
 \qquad- pe^{\gamma(C-R)}V_k(n-1),\mbox{ if } n\in[1,B-1],\\
V_k(n)\left[1-(1-p)e^{\gamma C}\right]e^{\gamma L}\\
~\qquad-pe^{\gamma (C-R)}V_k(n-1) \mbox{ if } n=B.
\end{cases}
\end{align}
We clearly have,
\begin{lemma}\label{lemma:basic}
 If the differential $\partial J_{k+1}(n),n\in [0,B]$ is a non-decreasing function of $n$, then the optimal policy produced at iteration $k+1$ by the RVI algorithm~\eqref{rvi1}-~\eqref{rvi2} is of threshold type. 
\end{lemma}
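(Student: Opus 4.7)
The plan is to argue directly from the monotonicity assumption that the set of queue lengths for which transmission is optimal forms an upward-closed interval $\{\tau, \tau+1, \ldots, B\}$ for some threshold $\tau$, which is exactly what the definition of a threshold policy requires. The key observation is that, by construction of the RVI update in~\eqref{rvi1}, the policy $\pi_{k+1}$ produced at iteration $k+1$ picks the action $u \in \{0,1\}$ minimizing $J_{k+1}(n,u)$, so it selects $u=1$ precisely when $\partial J_{k+1}(n) \geq 0$ (with any convenient tie-breaking rule) and selects $u=0$ when $\partial J_{k+1}(n) < 0$.

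First I would define
\[
\tau := \min\{ n \in [0,B] : \partial J_{k+1}(n) \geq 0 \},
\]
with the convention $\tau = B+1$ if the set is empty (in which case the ``never transmit'' policy is trivially a threshold policy). Under the hypothesis that $\partial J_{k+1}(\cdot)$ is non-decreasing, for every $n \geq \tau$ we have $\partial J_{k+1}(n) \geq \partial J_{k+1}(\tau) \geq 0$, so the greedy action in the RVI update is $u=1$; and for every $n < \tau$, by definition of $\tau$ we have $\partial J_{k+1}(n) < 0$, so the greedy action is $u=0$. This is precisely the threshold policy $\pi_\tau$ of the Definition preceding the lemma.

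There is essentially no heavy calculation here; the lemma is a soft consequence of the definition of a threshold policy plus the monotonicity hypothesis. The only subtlety worth flagging explicitly is the tie-breaking convention when $\partial J_{k+1}(n) = 0$ for some $n$: either choice at such a state is optimal, and both are consistent with a threshold structure (one simply shifts $\tau$ by one). The mild edge cases $\tau = 0$ (always transmit when the buffer is nonempty) and $\tau = B+1$ (never transmit) should be handled by the convention above. The main intellectual content of the paper is not in this lemma itself but in later verifying the monotonicity hypothesis on $\partial J_{k+1}$ inductively over $k$ via~\eqref{df}, which is where I anticipate the real work to lie.
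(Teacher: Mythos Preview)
Your proposal is correct and is precisely the obvious argument the paper has in mind: the paper states this lemma without proof, prefacing it only with ``We clearly have,'' and your write-up supplies exactly the natural justification (monotonicity of $\partial J_{k+1}$ forces the set $\{n:\partial J_{k+1}(n)\geq 0\}$ to be an upward-closed interval, hence a threshold rule). Your handling of tie-breaking and the degenerate cases $\tau=0$ and $\tau=B+1$ is appropriate and more careful than the paper itself.
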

Let us assume that $\partial J_{k+1}$ is non-decreasing in $n$, and try to prove that the function $\partial J_{k+2}$ is non-decreasing in $n$. This result will then imply that the optimal policy produced at iteration $k+2$ is also of threshold type.
\begin{lemma}\label{lemma2}
Let the optimal policy produced by the RVI algorithm at iteration $k+1$ be of threshold-type, with threshold value equal to $\tau$. Then the differential $\partial J_{k+1}$ satisfies
\begin{align}
\partial J_{k+1}(n) &\leq  0, n\in [0,\tau-1],\mbox{ and }\\
\partial J_{k+1}(n) &\geq  0, n\in [\tau,B].
\end{align}
The unscaled value function $\tilde{V}_{k+1}$ produced at iteration $k+1$ is given by,
\begin{align}\label{vf}
\tilde{V}_{k+1}(n) =
\begin{cases}
V_k(n+1), \mbox{ if } n\in \left[0,\tau-1\right],\\
p e^{\gamma (C-R)} V_k(n-1) + (1-p)e^{\gamma C} V_k(n+1), \\
\mbox{ if }n\in\left[\tau,B-1\right],\\
p e^{\gamma (C-R)} V_k(n-1) + (1-p)e^{\gamma (C+L)} V_k(n), \\
\mbox{ if }n= B.
\end{cases}
\end{align}
\end{lemma}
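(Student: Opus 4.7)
The plan is to observe that the lemma unpacks directly from the definition of a threshold policy combined with the RVI update rule in~\eqref{rvi1}, so that the argument splits naturally into two parts. First I would derive the sign conditions on $\partial J_{k+1}$ from the very meaning of ``action $u$ is selected at state $n$''. Second, I would substitute the optimal action at each $n$ into~\eqref{rvi1} to read off the piecewise form of $\tilde{V}_{k+1}$.

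For the sign conditions, note that the RVI rule picks at queue length $n$ whichever action attains $\min_{u\in\{0,1\}} J_{k+1}(n,u)$. The hypothesis that the policy produced at iteration $k+1$ is of threshold type with threshold $\tau$ is precisely the statement that action $0$ is the minimizer on $[0,\tau-1]$ and action $1$ is the minimizer on $[\tau,B]$. Rewriting each minimizer condition in terms of $\partial J_{k+1}(n) = J_{k+1}(n,0) - J_{k+1}(n,1)$ yields immediately $\partial J_{k+1}(n)\le 0$ on $[0,\tau-1]$ and $\partial J_{k+1}(n)\ge 0$ on $[\tau,B]$.

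For the explicit form of $\tilde{V}_{k+1}$, since the optimal action is now fixed at each $n$, I would substitute $\tilde{V}_{k+1}(n) = J_{k+1}(n,\pi^{\star}_{k+1}(n))$ into~\eqref{rvi1} and match branches with~\eqref{qfac} and with the definition of $J_{k+1}(n,1)$ preceding~\eqref{df}. For $n\in[0,\tau-1]$ the optimal action is $0$, so the first branch of~\eqref{qfac} gives $\tilde{V}_{k+1}(n) = V_k(n+1)$; this branch applies because $n+1\le\tau\le B$. For $n\in[\tau,B-1]$ the optimal action is $1$, and the middle branch of $J_{k+1}(n,1)$ produces the middle branch of~\eqref{vf}. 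For $n=B$ the boundary branch of $J_{k+1}(n,1)$ accounts simultaneously for the transmission cost and for the potential packet loss at a full buffer, yielding the last branch of~\eqref{vf}.

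The result is essentially a bookkeeping statement and I do not foresee any substantive obstacle. The only mild care required is in handling the boundary states: one must check that $n=0$ falls inside the idle region whenever $\tau\ge 1$, so that the first branch of~\eqref{qfac} is used, and inside the transmit region only when $\tau=0$, in which case the $n=0$ branch of $J_{k+1}(n,1)$ supplies $\tilde{V}_{k+1}(0)=e^{\gamma C}V_k(1)$ rather than the middle branch of~\eqref{vf}; one must also check that the $n=B$ case correctly uses the boundary branches of both~\eqref{qfac} and $J_{k+1}(n,1)$. Each such verification is a one-line consistency check.
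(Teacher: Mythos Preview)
Your proposal is correct and matches the paper's treatment: the paper states Lemma~\ref{lemma2} without proof, regarding it as an immediate consequence of the definitions of threshold policy, the differential $\partial J_{k+1}$, and the RVI update~\eqref{rvi1}. Your two-part unpacking (sign conditions from the minimizer characterization, then substitution of the selected action into~\eqref{qfac} and the $J_{k+1}(n,1)$ expressions) is exactly the intended reading, and your attention to the $\tau=0$ and $n=B$ boundary cases is appropriate bookkeeping that the paper simply elides.
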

We now show that if the differential $\partial J_{k+1}$ is non-decreasing, then $\partial J_{k+2}$ is also non-decreasing. Since under this assumption, the optimal policy at iteration $k+1$ is of threshold type, we can substitute the value of $V_{k+1}$ derived in Lemma~\ref{lemma2} into the relation for differential~\eqref{df} in order to obtain
\begin{align}\label{eq:arvix}
\partial J_{k+2}(n) = 
\begin{cases}
\left[1-e^{\gamma C}\right]V_k(n+2),n=0\\
\left[1-e^{\gamma C}(1-p)\right]V_k(n+2)\\
-pe^{\gamma (C-R)}V_k(n),n\in [1,\tau-2],\\
\left[pe^{\gamma (C-R)}V_k(n)+e^{\gamma C}(1-p)V_k(n+2) \right]\\
\left[1-e^{\gamma C}(1-p)\right]-pe^{\gamma(C-R)}V_k(n),\\
 n=\tau,\tau-1\\
\left(V_k(n)\left[1-(1-p)e^{\gamma C}\right]\right.\\
\left. - pe^{\gamma(C-R)}V_k(n-2)\right)pe^{\gamma(C-R)}\\
+ \left[\left[1-(1-p)e^{\gamma C}\right]V_k(n+2) \right.\\
\left.- pe^{\gamma (C-R)} V_k(n) \right]\times \\
~\qquad(1-p) e^{\gamma C}, n\in [\tau+1,B-2]\\
\left[pe^{\gamma(C-R)}V_k(n)+(1-p)e^{\gamma(C+L)}V_k(n+1)\right]\\
\times \left[1-(1-p)e^{\gamma C}\right]\\
-pe^{\gamma(C-R)}\left[pe^{\gamma(C-R)}V_k(n-2)\right.\\
\left.+(1-p)e^{\gamma C}V_k(n)\right], n=B-1.
% \left[p e^{\gamma (C-R)} V_k(n-1) + (1-p)e^{\gamma (C+L)} V_k(n)\right]\\
 %\times [1-(1-p)e^{\gamma C}]e^{\gamma L}\\
% -pe^{\gamma (C-R)}\left[p e^{\gamma (C-R)} V_k(n-2) \right.\\
 %\left.+ (1-p)e^{\gamma C} V_k(n)\right],\\
%  \mbox{ if } n = B.
 \end{cases}
 \end{align}
 The expression for $n=B$ is presented in the lemma below.
The above relations can be written more compactly as follows.
\begin{lemma}\label{lemma4}
Assume that the optimal policy at iteration $k+1$ is of threshold type. Then, the differential $\partial J_{k+2}$ is given by
\end{lemma}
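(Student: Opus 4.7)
The plan is to derive the compact expression for $\partial J_{k+2}$ by combining the differential recurrence~\eqref{df} with the piecewise form of $V_{k+1}$ established in Lemma~\ref{lemma2}. First I would apply~\eqref{df} at iteration $k+2$, which writes $\partial J_{k+2}(n)$ purely in terms of $V_{k+1}(n-1)$, $V_{k+1}(n)$, and $V_{k+1}(n+1)$. Then I would substitute in for each of those three values using the three-branch expression for $\tilde{V}_{k+1}$ from~\eqref{vf}; the normalization step~\eqref{rvi2} only multiplies $\tilde{V}_{k+1}$ by the positive scalar $1/\tilde{V}_{k+1}(0)$, so it preserves the sign pattern of $\partial J_{k+2}$ and factors uniformly through~\eqref{df}.

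Next I would organize the computation according to where each of the indices $n-1$, $n$, and $n+1$ lies relative to the threshold $\tau$ and the buffer boundary $B$. This partitions the queue-length domain into the boundary case $n=0$, the pure-idle interior $n\in[1,\tau-2]$, the two transition cases $n=\tau-1$ and $n=\tau$ in which neighbors straddle $\tau$, the pure-transmit interior $n\in[\tau+1,B-2]$, and finally the upper-boundary cases $n=B-1$ and $n=B$ in which the penalty factor $e^{\gamma L}$ enters through the third branch of~\eqref{vf}. For each case the appropriate branch of~\eqref{vf} is substituted and like terms are collected, producing exactly the expressions already displayed in~\eqref{eq:arvix} together with the missing $n=B$ line. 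The compact statement of Lemma~\ref{lemma4} is then obtained by factoring the recurring multipliers $pe^{\gamma(C-R)}$, $(1-p)e^{\gamma C}$, $e^{\gamma L}$, and $1-(1-p)e^{\gamma C}$ so that each case takes a uniform shape.

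The main obstacle is purely bookkeeping at the two regime transitions. Near $n=\tau$, the arguments $n-1$ and $n+1$ fall on opposite sides of the threshold and must be expanded using different branches of $V_{k+1}$, which accounts for the asymmetric expressions in the $n=\tau-1$ and $n=\tau$ lines of~\eqref{eq:arvix}. Near $n=B$, the penalty factor $e^{\gamma L}$ propagates in from the third branch and must be carried cleanly through two consecutive substitutions. Apart from these boundaries no new probabilistic or structural idea is required; Lemma~\ref{lemma4} is a direct mechanical consequence of one further RVI step applied to the threshold form of $V_{k+1}$ guaranteed by Lemma~\ref{lemma2}.
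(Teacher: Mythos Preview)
Your proposal is correct and follows essentially the same route as the paper: the paper obtains~\eqref{eq:arvix} by substituting the threshold form of $V_{k+1}$ from Lemma~\ref{lemma2} into the differential relation~\eqref{df}, and then states Lemma~\ref{lemma4} as the compact rewriting of those expressions. Your additional remark that the normalization step~\eqref{rvi2} contributes only a common positive scalar is a useful clarification that the paper leaves implicit.
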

\begin{align}\label{df1}
\partial J_{k+2}(n) = 
\begin{cases}
\left[1-e^{\gamma C}(1-p)\right]V_k(n+2)-pe^{\gamma (C)}V_k(n),\\
\mbox{ if }n=0\\
\left[1-e^{\gamma C}(1-p)\right]V_t(n+2)-pe^{\gamma (C-R)}V_k(n),\\
n\in [1,\tau-2],\\
(1-p)e^{\gamma C}\partial J_{k+1}(n+1) ,\mbox{ for }n=\tau-1,\tau\\
\partial J_{k+1}(n-1) pe^{\gamma(C-R)}\\
+ \partial J_{k+1}(n+1)(1-p) e^{\gamma C},  n\in [\tau+1,B-1]\\
 \mbox{ if } n = B.
\end{cases}
\end{align}
while for $n=B$,
\begin{align}\label{df1b}
&\partial J_{k+2}(n) \notag\\
&= [1-(1-p)e^{\gamma C}]e^{\gamma L}\left[p e^{\gamma (C-R)} V_k(n-1) \right.\notag\\
&\qquad\qquad\qquad\qquad\qquad\qquad\left.+ (1-p)e^{\gamma (C+L)} V_k(n)\right] \notag\\
&-pe^{\gamma (C-R)}\left[p e^{\gamma (C-R)} V_k(n-2) + (1-p)e^{\gamma C} V_k(n)\right],
\end{align}

We can now use the expression of $\partial J_{k+2}$ derived in Lemma~\ref{lemma4} in order to show that it is non-decreasing function of $n$.
\begin{lemma}\label{lemma3}
Assume that the differential $\partial J_{k+1}$ at iteration $k+1$ is non-decreasing function of $n$. Then, the differential $\partial J_{k+2}$ at iteration $k+2$ is also non-decreasing in $n$. 
\end{lemma}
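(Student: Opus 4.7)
The plan is to read off the monotonicity of $\partial J_{k+2}$ regime by regime from the case split in Lemma~\ref{lemma4}, and then stitch the regimes together at the handful of transition indices. The induction hypothesis that $\partial J_{k+1}$ is non-decreasing (with the threshold sign structure of Lemma~\ref{lemma2}) will be used repeatedly, alongside an auxiliary monotonicity fact on $V_k$ that is maintained in parallel with the main induction (from~\eqref{vf}, $V_k$ non-decreasing propagates to $V_{k+1}$ non-decreasing).

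A key observation streamlines the below-threshold block: comparing the expression for $\partial J_{k+1}(n+1)$ read off from~\eqref{df} with the $n\in[1,\tau-2]$ case of~\eqref{df1}, the two formulas coincide termwise. Hence on $[1,\tau-2]$ we simply have $\partial J_{k+2}(n)=\partial J_{k+1}(n+1)$, which is non-decreasing by hypothesis. At the threshold pair $n\in\{\tau-1,\tau\}$, the formula $(1-p)e^{\gamma C}\,\partial J_{k+1}(n+1)$ is a positive multiple of a non-decreasing sequence, hence non-decreasing on those two indices. For the above-threshold block $n\in[\tau+1,B-1]$, the expression $p\,e^{\gamma(C-R)}\,\partial J_{k+1}(n-1) + (1-p)\,e^{\gamma C}\,\partial J_{k+1}(n+1)$ is a non-negative linear combination of two shifted non-decreasing sequences, so it too is non-decreasing in $n$.

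What is left is to check monotonicity at the four transition inequalities
\[
\partial J_{k+2}(0)\le\partial J_{k+2}(1),\quad \partial J_{k+2}(\tau-2)\le\partial J_{k+2}(\tau-1),\quad \partial J_{k+2}(\tau)\le\partial J_{k+2}(\tau+1),\quad \partial J_{k+2}(B-1)\le\partial J_{k+2}(B).
\]
The middle two are the easiest: just below threshold $\partial J_{k+1}(n+1)\le 0$ while just above threshold $\partial J_{k+1}(n+1)\ge 0$ by Lemma~\ref{lemma2}, and a short algebraic comparison shows the $(1-p)e^{\gamma C}$-only expression at $n=\tau$ is dominated by the two-term combination at $n=\tau+1$ (and symmetrically on the other side). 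The $n=0$ transition is handled by expanding $\partial J_{k+2}(1)-\partial J_{k+2}(0)$ and using monotonicity of $V_k$ together with the inequality $V_k(1)\le e^{\gamma R}V_k(0)$, which itself follows from~\eqref{vf} applied at states $0$ and $\tau$ and the fact that idling is preferred at $n=0$.

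The main obstacle is the endpoint $n=B$. The formula~\eqref{df1b} mixes the loss penalty $e^{\gamma L}$ into the non-transmit branch and does not collapse to the clean combinator-of-$\partial J_{k+1}$ shape that makes the other cases transparent. Establishing $\partial J_{k+2}(B-1)\le\partial J_{k+2}(B)$ will therefore require expanding both sides explicitly in terms of $V_k(B-2),V_k(B-1),V_k(B)$, grouping the contributions proportional to $p\,e^{\gamma(C-R)}$ and to $(1-p)e^{\gamma C}$, and invoking (i) $V_k$ non-decreasing, (ii) the factor $e^{\gamma L}\ge 1$ appearing in front of the idle-at-$B$ branch, and (iii) the sign $\partial J_{k+1}(B)\ge 0$ coming from Lemma~\ref{lemma2} (since $B\ge\tau$). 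This is the only truly delicate step; after it, the piecewise argument assembles into the claimed global monotonicity of $\partial J_{k+2}$, closing the induction and, via Lemma~\ref{lemma:basic}, establishing that every RVI iterate is of threshold type.
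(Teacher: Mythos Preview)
Your overall architecture is the same as the paper's: read Lemma~\ref{lemma4} regime by regime, use that the interior blocks are nonnegative combinations of shifts of $\partial J_{k+1}$, and then check the few transition inequalities. Your observation that $\partial J_{k+2}(n)=\partial J_{k+1}(n+1)$ on $[1,\tau-2]$ is exactly what the paper's ``similar reasoning'' amounts to, and your handling of $n=\tau-1,\tau,\tau+1$ matches the paper's.

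Where you diverge unnecessarily is at $n=0$. You introduce a parallel induction that $V_k$ is non-decreasing and an auxiliary inequality $V_k(1)\le e^{\gamma R}V_k(0)$, whose justification (``follows from~\eqref{vf} applied at states $0$ and $\tau$'') is not convincing as stated. None of this is needed: the paper simply uses $R>0$ to get $-pe^{\gamma C}V_k(0)\le -pe^{\gamma(C-R)}V_k(0)$, which turns the $n=0$ expression in~\eqref{df1} into $\partial J_{k+1}(1)$; then your own identity gives $\partial J_{k+2}(1)=\partial J_{k+1}(2)$, and the induction hypothesis $\partial J_{k+1}(1)\le\partial J_{k+1}(2)$ finishes it in one line. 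You should replace your $n=0$ paragraph with this.

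On the endpoint $n=B$ you are right that this is the delicate step. The paper does exactly what you propose---writes the required inequality in terms of $V_{k+1}(B),V_{k+1}(B-1),V_{k+1}(B-2)$, namely $V_{k+1}(B)[1-(1-p)e^{\gamma C}](e^{\gamma L}-1)\ge pe^{\gamma(C-R)}[V_{k+1}(B-1)-V_{k+1}(B-2)]$---and then stops, so neither your sketch nor the paper's proof actually closes this gap. Your listed ingredients (monotonicity, $e^{\gamma L}\ge 1$, $\partial J_{k+1}(B)\ge 0$) are the natural ones, but be aware that a complete verification here is still owed.
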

\begin{proof}
It follows from Lemma~\ref{lemma4} that for $ n\in [\tau+1,B-1]$, the function $\partial J_{k+2}(n)$ is a linear combination of the functions $\partial J_{k+1}(n-1)$ and $\partial J_{k+1}(n+1)$, both of which are assumed to be non-decreasing functions of $n$. Thus, the claim is true for $n\in [\tau+1,B-1]$. Similar reasoning proves the claim for $n\in [1,\tau-2]$. 

We now verify whether the following two inequalities are true,
\begin{align*}
\partial J_{k+2}(\tau+1)\geq \partial J_{k+2}(\tau)\mbox{ and }
\partial J_{k+2}(\tau)\geq \partial J_{k+2}(\tau-1).
\end{align*}
We note that,
\begin{align*}
\partial J_{k+2}(\tau+1) &= \partial J_{k+1}(\tau) pe^{\gamma(C-R)}\\
&+ \partial J_{k+1}(\tau+2)(1-p) e^{\gamma C}\\
&\geq \partial J_{k+1}(\tau+2)(1-p) e^{\gamma C}\\
&\geq \partial J_{k+1}(\tau+1)(1-p) e^{\gamma C}\\
&=\partial J_{k+2}(\tau),
\end{align*}
where the first inequality follows since the optimal policy at iteration $k+1$ is of threshold type, and from Lemma~\ref{lemma2} we have that $\partial J_{k+1}(\tau)\geq 0$. The second inequality follows from our assumption that $\partial J_{k+1}$ is non-decreasing in $n$, i.e., $\partial J_{k+1}(\tau+2)\geq \partial J_{k+1}(\tau+1)$.

Next, we have,
\begin{align*}
\partial J_{k+2}(\tau) &= (1-p)e^{\gamma C}\partial J_{k+1}(\tau+1)\\
&\geq (1-p)e^{\gamma C}\partial J_{k+1}(\tau)\\
&=\partial J_{k+2}(\tau-1),
\end{align*}
where the inequality follows from our assumption that $\partial J_{k+1}(\tau)$ is non-decreasing.

We now prove $\partial J_{k+2}(0)\leq \partial J_{k+2}(1)$. We substitute the values of $\partial J_{k+2}(0),\partial J_{k+2}(1)$ from Lemma~\ref{lemma4}, so that for $n=0$ we have,
\begin{align*}
\partial J_{k+2}(n) &= \left[1-e^{\gamma C}(1-p)\right]V_k(n+2)-pe^{\gamma (C)}V_k(n)\\
&\leq \left[1-e^{\gamma C}(1-p)\right]V_k(n+2)-pe^{\gamma (C-R)}V_k(n)\\
&\leq \left[1-e^{\gamma C}(1-p)\right]V_k((n+1)+2)\\
&-pe^{\gamma (C-R)}V_k(n+1)\\
&=\partial J_{k+2}(n+1),
\end{align*}
where the first inequality follows since $R>0$, and the second inequality follows since $\partial J_{k+1}$ is assumed to be non-decreasing in $n$.

Finally, we prove $\partial J_{k+2}(B)\geq \partial J_{k+2}(B-1)$. Substituting the values of $\partial J_{k+2}(B)$ from Lemma~\ref{lemma4}, and the vale of $\partial J_{k+2}(B-1)$ from~\eqref{df}, 
the condition $\partial J_{k+2}(B)\geq \partial J_{k+2}(B-1)$ reduces to,
\begin{align*}
&V_{k+1}(B)[1-(1-p)e^{\gamma C}]e^{\gamma L}-pe^{\gamma (C-R)}V_{k+1}(B-1)\\
&\geq V_{k+1}(B)\left[1-(1-p)e^{\gamma C}\right]-pe^{\gamma (C-R)}V_{k+1}(B-2),
\end{align*}
or equivalently
\begin{align*}
&V_{k+1}(B)\left[1-(1-p)e^{\gamma C}\right]\left[e^{\gamma L}-1\right]\\
&\geq pe^{\gamma (C-R)}\left[V_{k+1}(B-1)-V_{k+1}(B-2)\right]
\end{align*}
This concludes the proof.

\end{proof}
\begin{theorem}[Optimality of Threshold Policy]\label{th1}
For the single client risk-sensitive scheduling problem of minimizing the infinite-horizon cost~\eqref{rsmdp}, a threshold policy is optimal.
\end{theorem}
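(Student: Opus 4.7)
The plan is to run the RVI algorithm from a convenient initial value function $V_0$, prove by induction on the iteration index $k$ that every RVI policy is of threshold type, and then pass to the limit using the fact (quoted after~\eqref{rvi2}) that the iterates $V_k$ converge to the relative value function $V$ and the corresponding greedy policies converge to the optimal policy $\pi^\star$. For the base case I would take $V_0\equiv 1$ and substitute into~\eqref{df}. The resulting $\partial J_1$ is piecewise constant on $n\in[1,B-1]$, so only the two jumps at $n{=}0\to 1$ and $n{=}B{-}1\to B$ need to be checked. The first reduces to $p\,e^{\gamma C}(1-e^{-\gamma R})\geq 0$, which is immediate since $R>0$, and the second reduces to $(e^{\gamma L}-1)[1-(1-p)e^{\gamma C}]\geq 0$, which holds under the standing well-posedness assumption $(1-p)e^{\gamma C}\leq 1$ that is needed anyway for the risk-sensitive cost in~\eqref{rsmdp} to be finite.

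Once the base case is in place, the inductive step is already packaged into the earlier lemmas. Assume $\partial J_{k+1}$ is non-decreasing in $n$. Lemma~\ref{lemma:basic} then says the policy at iteration $k+1$ is a threshold policy with some threshold $\tau_{k+1}\in\{0,1,\ldots,B+1\}$, and Lemma~\ref{lemma3} gives that $\partial J_{k+2}$ is again non-decreasing; another application of Lemma~\ref{lemma:basic} gives a threshold policy at iteration $k+2$. By induction, every RVI iterate produces a threshold policy $\pi_{\tau_{k}}$. Since the set of threshold policies is finite, the sequence $\{\tau_{k}\}$ takes some value $\tau^\star$ infinitely often, and on that subsequence the extracted policies (which converge to $\pi^\star$) all equal $\pi_{\tau^\star}$; hence $\pi^\star=\pi_{\tau^\star}$ and the theorem is proved.

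The main obstacle I expect is not the induction but the base case and the limiting argument. The step from iteration $k+1$ to $k+2$ is entirely automatic given Lemmas~\ref{lemma2}--\ref{lemma3}. In contrast, verifying monotonicity of $\partial J_1$ for the initial $V_0$ requires an explicit boundary computation and forces one to surface the parametric condition $(1-p)e^{\gamma C}\leq 1$ under which the infinite-horizon risk-sensitive MDP is well-defined. Similarly, the passage from "each RVI policy is threshold" to "$\pi^\star$ is threshold" is harmless here only because the threshold takes finitely many values; one must actually cite that the RVI greedy policies converge to $\pi^\star$ (and not merely that the value functions converge), to ensure that the limiting threshold identifies the minimizer of the Bellman fixed-point equation~\eqref{fp}.
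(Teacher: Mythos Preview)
Your proposal is correct and follows essentially the same inductive route as the paper: initialize $V_0\equiv 1$, verify directly that $\partial J_1$ is non-decreasing, invoke Lemma~\ref{lemma3} for the step, and pass to the limit via convergence of the RVI scheme; your subsequence extraction is a slightly more careful version of the paper's one-line appeal to convergence. One small correction: the condition $(1-p)e^{\gamma C}\le 1$ that you surface is indeed used (the paper relies on it implicitly both in the base case at $n=B$ and in the proof of Lemma~\ref{lemma3}), but your stated justification is off---the cost~\eqref{rsmdp} is finite for any bounded one-step cost because of the $1/T$ normalization, so this inequality is a structural hypothesis needed for the monotonicity argument, not a well-posedness requirement.
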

\begin{proof}
We will use induction on the iteration number $k$ of the RVI algorithm in order to prove the theorem. For the RVI algorithm, let us initialize the $V_0(n)=1,\forall n\in[0,B]$. It then follows that,
\begin{align}\label{df3}
\partial J_{1}(n)=
\begin{cases}
(1-e^{\gamma C})\mbox{ if } n=0,\\
\left[1-(1-p)e^{\gamma C}\right]- pe^{\gamma(C-R)},\mbox{ if } n\in[1,B-1],\\
\left[1-(1-p)e^{\gamma C}\right]e^{\gamma L}-pe^{\gamma (C-R)} \mbox{ if } n=B.
\end{cases}
\end{align}
It is easily verified that $\partial J_1$ is non-decreasing in $n$. Thus, it now follows from Lemma~\ref{lemma3}, that at each iteration $k$ of the RVI algorithm, the function $\partial J_k$ is non-decreasing in $n$. Thus, from Lemma~\ref{lemma:basic} we have that the policy produced by the RVI algorithm at each iteration $k$ is of threshold type. Since the RVI algorithm converges to the optimal policy, the optimal policy is also of threshold type.  
\end{proof}
Next, we show that for the optimal policy $\pi^\star$, the threshold denoted as $\tau^\star$ increases with the transmission cost $C$. The following condition ensures that the threshold of the policy produced by the RVI algorithm at stage $k+1$ is an increasing function of transmission cost $C$.
\begin{condition}[Monotonicity]\label{cond:1}
If $C_1,C_2>0$ are such that $C_1>C_2$, then $\partial J^{C_1}_{k+1}(m) \leq \partial J^{C_2}_{k+1}(m)$ for each $m\in [0,B]$. % or equivalently,
%\begin{align}\label{mono}
%&pe^{\gamma(C_1-R)}V_{k,1}(m-1)+(1-p)e^{\gamma C_1}V_{k,1}(m+1) -V_{k,1}(m+1)\notag\\
%&\geq pe^{\gamma(C_2-R)}V_{k,1}(m-1)\notag\\
%&+(1-p)e^{\gamma C_2}V_{k,1}(m+1)-V_{k,2}(m+1),\forall n\in [0,B].
%\end{align}
\end{condition}
%In the above, $V_{k,1}$ is the value function associated with $k$ iteration of RVI applied to the control problem~\eqref{rsmdp} with transmission cost set at $C_1$, while $\partial J^{C_1}_k$ is the corresponding cost differential. Similarly for $V_{k,2},\partial J^{C_2}_k$.The condition can equivalently be stated as,
%\begin{align}
%&e^{\gamma C_1}\left(pe^{-\gamma R}V_{t,1}(n-1)+(1-p)V_{t,1}(n+1)\right)\notag\\
%&~\qquad -V_{t,1}(n+1)\notag\\
%&\geq e^{\gamma C_2}\left(pe^{-\gamma R}V_{t,1}(n-1)+(1-p)V_{t,1}(n+1)\right)\notag\\
%&-V_{t,2}(n+1),\forall n\in [0,B],
%\end{align}
\begin{lemma}\label{monoc}
Assume that the Condition~\ref{cond:1} is true for the RVI algorithm at iteration $k$. Then, the Condition~\ref{cond:1} also holds true at iteration $k+1$ of the RVI algorithm, and hence for the policy produced at iteration $k+1$, the threshold value is an increasing function of the transmission cost $C$.
\end{lemma}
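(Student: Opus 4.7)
The plan is to proceed by induction on the RVI iteration index. For the base case, I would inspect the explicit expression~\eqref{df3} for $\partial J_1$: because $V_0\equiv 1$, each branch becomes an elementary function of $C$, namely $1-e^{\gamma C}$ at $n=0$, $[1-(1-p)e^{\gamma C}]-pe^{\gamma(C-R)}$ on the interior, and $[1-(1-p)e^{\gamma C}]e^{\gamma L}-pe^{\gamma(C-R)}$ at $n=B$. Each is manifestly non-increasing in $C$, so Condition~\ref{cond:1} holds at the first iteration.

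For the inductive step, suppose that $\partial J^{C_1}_k(m)\leq \partial J^{C_2}_k(m)$ for every $m\in[0,B]$ whenever $C_1>C_2$. Theorem~\ref{th1} guarantees that the RVI policy at iteration $k$ is of threshold type with some threshold $\tau_k$, so I can invoke Lemma~\ref{lemma4} with the iteration index shifted down by one to express $\partial J_{k+1}$ explicitly. The resulting formula splits according to the position of $n$ relative to $\tau_k$: on $[\tau_k+1,B-1]$ it equals the combination $pe^{\gamma(C-R)}\partial J_k(n-1)+(1-p)e^{\gamma C}\partial J_k(n+1)$ with both summands non-negative; on $[1,\tau_k-2]$ it is linear in $V_{k-1}$; and the transitional indices $n\in\{0,\tau_k-1,\tau_k,B-1,B\}$ are handled directly via~\eqref{df1} and~\eqref{df1b}. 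In each case I would compare the $C_1$- and $C_2$-versions term by term, using the inductive hypothesis together with the sign information from Lemma~\ref{lemma2} (namely $\partial J_k(m)\leq 0$ for $m<\tau_k$ and $\partial J_k(m)\geq 0$ for $m\geq \tau_k$).

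The main obstacle is the above-threshold window, where the quantity of interest has the form $e^{\gamma C}g^C$ with $g^C=pe^{-\gamma R}\partial J_k(n-1)+(1-p)\partial J_k(n+1)\geq 0$. The inductive hypothesis makes $g^C$ non-increasing in $C$, but $e^{\gamma C}$ is increasing, so the product comparison does not follow from a purely term-by-term argument. I expect to resolve this by propagating a stronger invariant through the induction, such as joint monotonicity of $e^{\gamma C}V_k^C$ or a quantitative bound on the rate at which $\partial J_k^C(m)$ decays in $C$; the normalization~\eqref{rvi2} is also useful here, since $\tilde V_{k+1}(0)$ absorbs part of the $e^{\gamma C}$ growth when one converts between $\tilde V$ and $V$.

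Once the induction is complete and $\partial J^{C_1}_{k+1}(m)\leq \partial J^{C_2}_{k+1}(m)$ is established, the threshold claim is immediate: the threshold at iteration $k+1$ is $\tau^C=\min\{m:\partial J^C_{k+1}(m)\geq 0\}$, and every $m$ at which $\partial J^{C_1}_{k+1}(m)\geq 0$ also satisfies $\partial J^{C_2}_{k+1}(m)\geq \partial J^{C_1}_{k+1}(m)\geq 0$, so $\tau^{C_2}\leq \tau^{C_1}$, i.e.\ the threshold is non-decreasing in the transmission cost.
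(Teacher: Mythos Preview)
Your overall architecture --- induct using the expression of $\partial J$ at one stage in terms of the previous stage (Lemma~\ref{lemma4}, index-shifted), then compare the $C_1$- and $C_2$-versions case by case --- is exactly the paper's approach. Your base case and your handling of the below-threshold and transitional indices would go through essentially as the paper's does.

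You have, however, put your finger on a real difficulty that the paper's proof simply glosses over. In the above-threshold window the paper asserts
\[
(1-p)e^{\gamma C_1}\partial J^{C_1}_{k+1}(n+1) + pe^{\gamma(C_1-R)}\partial J^{C_1}_{k+1}(n-1)
\;\leq\;
(1-p)e^{\gamma C_2}\partial J^{C_2}_{k+1}(n+1) + pe^{\gamma(C_2-R)}\partial J^{C_2}_{k+1}(n-1)
\]
and attributes it to Condition~\ref{cond:1}; but as you correctly observe, the exponential coefficients increase in $C$ while the differentials (by hypothesis) decrease, so this is \emph{not} a term-by-term consequence of the inductive hypothesis. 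Your diagnosis is right, and the paper supplies no additional argument here. Your proposed remedy --- strengthening the inductive invariant, possibly using the normalization~\eqref{rvi2} --- is plausible but is left at the level of a sketch; until such an invariant is actually identified and shown to propagate, neither your argument nor the paper's is complete on this point.

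There is a second gap that both you and the paper leave unaddressed: the threshold you denote $\tau_k$ depends on $C$. The case split in Lemma~\ref{lemma4} (and hence the very form of $\partial J_{k+1}(n)$) is governed by the position of $n$ relative to the iteration-$k$ threshold, which may differ between the $C_1$- and $C_2$-systems. Writing a single $\tau_k$ and comparing like branches of~\eqref{df1} only covers the states where both systems fall into the same case; the mismatched region (roughly $\tau_k^{C_2}\leq n<\tau_k^{C_1}$, and the analogues near $\tau\pm 1$) needs a separate argument that neither proof provides.
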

\begin{proof}
In the ensuing discussion, we let $V_{k,1}$ denote the value function associated with $k$-th iteration of RVI applied to the risk-sensitive control problem~\eqref{rsmdp} with transmission cost set at $C_1$, while $\partial J^{C_1}_k$ will denote the corresponding cost differential. Similarly for $V_{k,2},\partial J^{C_2}_k$.

In order to prove the claim, we need to show that $ \partial J^{C}_{k+2}(n)$ is increasing function of $C$ for each $n\in[0,B]$.
For an $n\in [\tau+1,B-1]$, and $C_1>C_2>0$ we have,
\begin{align*}
&\partial J^{C_1}_{k+2}(n) \\
%&=-pe^{\gamma(C_1-R)}V_{k+1,1}(n-1)-(1-p)e^{\gamma C_1}V_{k+1,1}(n+1)\notag\\
%&\qquad+V_{k+1,1}(n+1) \notag\\
%&=-pe^{\gamma(C_1-R)}e^{\gamma C_1}\left[pe^{-\gamma R} V_{k,1}(n-2)+(1-p)V_{k,1}(n)     \right]\\
%&+\left(1-(1-p)e^{\gamma C_1}\right)e^{\gamma C_1}\left[pe^{-\gamma R}V_{k,1}(n)\right.\\
%&\qquad\qquad\qquad\qquad\qquad\qquad\left.+(1-p)V_{k,1}(n+2)\right]\\
%&=\left[-pe^{\gamma(C_1-R)}V_{k,1}(n)+\right.\\
%&\qquad\qquad \left.\left(1-(1-p)e^{\gamma C_1}\right)V_{k,1}(n+2)\right](1-p)e^{\gamma C_1} \\
%&+ pe^{\gamma(C_1-R)}\left[\left(1-(1-p)e^{\gamma C_1}\right)V_{k,1}(n)\right.\\
%&\qquad\qquad\qquad\qquad\qquad\qquad\qquad\left.-pe^{\gamma (C_1-R)}V_{k,1}(n-2) \right]\\
&=(1-p)e^{\gamma C_1}\partial J^{C_1}_{k+1}(n+1) + pe^{\gamma(C_1-R)} \partial J^{C_1}_{k+1}(n-1)\\
&\leq (1-p)e^{\gamma C_2}\partial J^{C_2}_{k+1}(n+1) + pe^{\gamma(C_2-R)} \partial J^{C_2}_{k+1}(n-1)\\
&=J^{C_2}_{k+2}(n),
\end{align*}
where the equalities follow from the relation~\eqref{df1} and the inequality follows from our assumption that the  Condition~\ref{cond:1} is satisfied at iteration $k+1$ of the RVI algorithm.

Next, we prove the claim for $n\in [1,\tau-1]$. For $C_1>C_2>0$ and $n\in [1,\tau-2]$ we have
\begin{align*}
&\partial J^{C_1}_{k+2}(n) \\
&=\left[1-e^{\gamma C_1}(1-p)\right]V_{k,1}(n+2)-pe^{\gamma (C_1-R)}V_{k,1}(n)\\
&=\partial J^{C_1}_{k}(n+1)\\
&\leq \partial J^{C_2}_{k}(n+1)\\
&=\partial J^{C_2}_{k+2}(n),
\end{align*}
where the inequality results from Condition~\ref{cond:1}.

Now we prove the desired condition for $n=0$. It follows from~\eqref{df} that the condition $\partial J^{C_1}_{k+2}(0)\leq \partial J^{C_2}_{k+2}(0)$ reduces to $(1-e^{\gamma C})V_{k+1,1}(1)\leq (1-e^{\gamma C_2})V_{k+1,2}(1)$. Since $C_1,C_2>0$ the condition is equivalent to $V_{k+1,1}(1)\geq V_{k+1,2}(1)$. Fix a time horizon $T>0$, and a scheduling policy $\pi$, and consider the operation of two systems under the application of the policy $\pi$. The transition probabilities of the two controlled Markovian systems are taken to be the same, but their transmission costs are set at $C_1$ and $C_2$. Construct their sample paths on the same probability space. It now follows from stochastic coupling~\cite{thorisson1995coupling}, that the sample path cost $\sum_{t=1}^{T} CU(t) - R(t)+L(t)$, or equivalently the cost $e^{\gamma \sum_{t=1}^{T} CU(t) - R(t)+L(t)}$ incurred by the system with cost set at $C_1$ is greater than or equal to the system with cost equal to $C_2$. Hence it follows that $V_{k+1,1}(1)\geq V_{k+1,2}(1)$.

For $n=\tau,\tau-1$, the differential $ \partial J^{C_1}_{k+2}(n)$ yields us
\begin{align*}
&\partial J^{C_1}_{k+2}(n)\\
%&=-pe^{\gamma(C_1-R)}V_{k+1,1}(\tau-1)+\left[1-(1-p)e^{\gamma C_1}\right]V_{k+1,1}(\tau+1)\\
%&=-pe^{\gamma(C_1-R)}V_{k,1}(\tau) + \left[1-(1-p)e^{\gamma C_1}\right] e^{\gamma C_1}\times\\
%& ~\qquad\qquad\qquad\left[pe^{-\gamma R}V_{k,1}(\tau)+(1-p)V_{k,1}(\tau+2)\right]\\
%&=-pe^{\gamma(C_1-R)}(1-p)e^{\gamma C_1}V_{k,1}(\tau) \\
%&\qquad\qquad\qquad+ \left[1-(1-p)e^{\gamma C_1}\right] e^{\gamma C_1} \left[(1-p)V_{k,1}(\tau+2)\right]\\
%&=(1-p)e^{\gamma C_1}\left[-pe^{\gamma(C_1-R)}V_{k,1}(\tau)\right.\\
%&~\qquad~\qquad~\qquad~\qquad\left.+ \left[1-(1-p)e^{\gamma C_1}\right]V_{k,1}(\tau+2)\right]\\
&=(1-p)e^{\gamma C_1}\partial J^{C_1}_{k+1}(n+1)\\
&\leq (1-p)e^{\gamma C_1}\partial J^{C_2}_{k+1}(n+1)\\
&=\partial J^{C_2}_{k+2}(n),
\end{align*}
where the equality follows from the relation~\eqref{df1}, and the inequality results from Condition~\ref{cond:1}. 
%\textbf{XXX difference starts from here and ends here XXX}
%For $n=B$, 
\end{proof}

\begin{theorem}\label{th:2}
Consider the problem of designing a scheduling policy that makes decisions regarding packet transmissions in order to minimize the infinite horizon risk-sensitive cost~\eqref{rsmdp}. For $C_1>C_2>0$, let $\tau_{C_1}$ and $\tau_{C_2}$ denote the threshold values of the optimal policies when transmission costs are set at $C_1$ and $C_2$ respectively. We then have $\tau_{C_1}\geq \tau_{C_2}$.
\end{theorem}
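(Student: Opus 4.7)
The plan is to prove Theorem~\ref{th:2} by induction on the RVI iteration number, mirroring the structure of the proof of Theorem~\ref{th1}. The core idea is to establish Condition~\ref{cond:1} at every iteration and then pass to the limit.

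First, I would verify the base case. Using the initialization $V_0(n)=1$ for all $n\in[0,B]$, the expression~\eqref{df3} for $\partial J_1$ is in closed form. Inspecting each of the three branches, $\partial J_1(0)=1-e^{\gamma C}$, $\partial J_1(n)=1-(1-p)e^{\gamma C}-pe^{\gamma(C-R)}$ on $[1,B-1]$, and $\partial J_1(B)=[1-(1-p)e^{\gamma C}]e^{\gamma L}-pe^{\gamma(C-R)}$, it is immediate that each branch is a strictly decreasing function of $C$ for $\gamma>0$. Hence Condition~\ref{cond:1} holds at iteration $k=1$.

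Second, I would invoke Lemma~\ref{monoc} iteratively: it shows precisely that if Condition~\ref{cond:1} holds at iteration $k+1$ then it holds at iteration $k+2$. Combined with the base case, Condition~\ref{cond:1} holds at every iteration $k\geq 1$. That is, for all $C_1>C_2>0$ and all $m\in[0,B]$,
\begin{equation*}
\partial J^{C_1}_{k}(m)\leq \partial J^{C_2}_{k}(m),\quad k\geq 1.
\end{equation*}

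Third, I would pass to the limit $k\to\infty$. Since the RVI iteration converges (as cited in the exposition following~\eqref{rvi2}), the value functions $V_{k,1},V_{k,2}$ and hence the differentials $\partial J^{C_1}_k,\partial J^{C_2}_k$ converge to limits $\partial J^{C_1},\partial J^{C_2}$ associated with the optimal policies, and the pointwise inequality is preserved in the limit: $\partial J^{C_1}(m)\leq \partial J^{C_2}(m)$ for every $m\in[0,B]$.

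Finally, I would translate this pointwise inequality into the claim about thresholds. By Theorem~\ref{th1}, the optimal policy under cost $C_i$ is of threshold type, and by Lemma~\ref{lemma:basic} its threshold $\tau_{C_i}$ is the smallest index $n$ at which $\partial J^{C_i}(n)\geq 0$. Since $\partial J^{C_1}$ lies pointwise below $\partial J^{C_2}$, every $n$ at which $\partial J^{C_1}$ is non-negative also satisfies $\partial J^{C_2}(n)\geq 0$, so the sign-change index for $C_1$ cannot be smaller than that for $C_2$, yielding $\tau_{C_1}\geq \tau_{C_2}$. The main obstacle is conceptually the inductive verification (handled already by Lemma~\ref{monoc}); the remaining subtlety is the convergence argument in the limit step, which is harmless here because $[0,B]$ is finite and the RVI convergence is coordinatewise, so no uniform tail estimate is required.
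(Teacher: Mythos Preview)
Your proposal is correct and follows essentially the same route as the paper: verify Condition~\ref{cond:1} at the base iteration via the closed form~\eqref{df3}, then propagate it through all iterations using Lemma~\ref{monoc}. Your write-up is in fact more complete than the paper's, which stops at ``the result now follows by induction''; you make explicit the passage to the limit of the RVI iterates and the translation of the pointwise inequality $\partial J^{C_1}\leq \partial J^{C_2}$ into $\tau_{C_1}\geq \tau_{C_2}$, both of which the paper leaves implicit.
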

\begin{proof}
%As earlier, let $V_{k,1},\partial J^{C_1}_{k}$ be the value function and differential costs associated with the risk-sensitive MDP when the transmission cost is set at $C_1$.
Consider the optimal risk-sensitive control problem~\eqref{rsmdp} with transmission cost set at $C$, and initialize $V_{0}(n)=1,\forall n\in[0,B]$. We then have 
\begin{align}
\partial J_{1}(n)=
\begin{cases}
(1-e^{\gamma C})\mbox{ if } n=0,\\
\left[1-(1-p)e^{\gamma C}\right]- pe^{\gamma(C-R)},\mbox{ if } n\in[1,B-1],\\
\left[1-(1-p)e^{\gamma C}\right]e^{\gamma L}-pe^{\gamma (C-R)} \mbox{ if } n=B.
\end{cases}
\end{align}
It is easily verified that $\partial J_{1}(n)$ is non-increasing function of $C$, and hence Condition~\ref{cond:1} holds true at iteration $k=1$ of the RVI algorithm. 

The result now follows by using induction on iteration number $k$ in conjunction with Lemma~\ref{monoc}.
\end{proof}
\section{Computing the Optimal Policy}\label{sec:opt}
Having derived the structure of the optimal policy, we would like to compute the value of threshold $\tau$ corresponding to the optimal policy. In view of Theorem~\ref{th:2}, we will derive the set of values of transmission cost $C$ such that the policy $\pi_{\tau}$ is optimal when the transmission cost is set at $C$.

 It follows from the optimality conditions~\eqref{fp} that the following set of $B+1$ equations need to be solved in order to derive the performance of $\pi_{\tau}$.
\begin{align}
\alpha V(0) & = V(1),\label{recur} \\
\alpha V(i) &= V(i+1)~\forall i\in\left[1,\tau-1\right],\label{recur1}\\
\alpha V(i) &= \exp(\gamma C ) \left(p\exp(-\gamma R)V(i-1)\right.\notag\\
&\qquad\qquad\left.+(1-p)V(i+1)\right),~i\in\left[\tau,B-1\right]\label{recur2}\\
\alpha V(B) &= \exp(\gamma C) \left(p \exp(-\gamma R)V(B-1) \right.\notag\\
&\qquad\qquad\left.+ (1-p)e^{\gamma L}V(B) \right)\label{recur3}
\end{align}
where $\alpha$ is the exponential of the infinite horizon risk-sensitive cost, and $V(i)$ is the relative cost associated with the system starting in state $i$. We now solve the set of equations~\eqref{recur}-\eqref{recur3}. Clearly,
\begin{align}\label{v1}
V(i) = \alpha^i, i\in\left[0,\tau\right].
\end{align}
The characteristic equation corresponding to recursive relations~\eqref{recur2} is given by,
\begin{align*}
(1-p)\lambda^2 -\alpha e^{-\gamma C} \lambda + pe^{-\gamma R}=0,
\end{align*}
whose solutions are given by,
\begin{align*}
\lambda_1,\lambda_2 &= \frac{\alpha e^{-\gamma C} +,-\sqrt{(\alpha e^{-\gamma C} )^2-4p(1-p)e^{-\gamma R}}}{2(1-p)}.
%\lambda_2 &= \frac{\alpha e^{-\gamma C} -\sqrt{(\alpha e^{-\gamma C} )^2-4p(1-p)e^{-\gamma R}}}{2(1-p)}.
\end{align*}
Thus, for $i\in [0,B-\tau]$, we have,
 \begin{align}\label{v2}
V(\tau+i) = K_1 \lambda_1^{i+1}+K_2\lambda_2^{i+1}, i=1,2,\ldots,B-\tau,
\end{align}
The initial conditions for the recursions~\eqref{recur2} are determined by the evaluation of $w(\tau-1)$ and $w(\tau)$, i.e.,
\begin{align*}
K_1 + K_2 &= V(\tau-1) =\alpha^{\tau-1}, \\
K_1\lambda_1 + K_2 \lambda_2 &= V(\tau) = \alpha^\tau.
\end{align*}
Solving for $K_1,K_2$ in terms of $w(\tau-1),w(\tau)$ we get,
\begin{align}
K_1 &= \frac{\alpha^{\tau-1}(\alpha-\lambda_2)(1-p)}{\delta},\label{k1}\\
K_2 &= \frac{\alpha^{\tau-1}(\lambda_1-\alpha)(1-p)}{\delta}.\label{k2}
\end{align}
The average cost $\alpha$ can be obtained by utilizing the boundary condition at $i=B$, i.e, the equation~\eqref{recur3},
\begin{align}\label{eq:alpha}
&\left[\alpha-(1-p)e^{\gamma(C+L)}\right] \left(K_1 \lambda_1^{B-\tau+1}+K_2\lambda_2^{B-\tau+1}\right)\notag\\
&=e^{\gamma C}pe^{-\gamma R} \left(K_1 \lambda_1^{B-\tau}+K_2\lambda_2^{B-\tau}\right)
\end{align}
%We should expect that the policy $\pi_\tau$ is optimal for a range of values of transmission cost $C$. 
We now find the values of transmission cost $C$, for which $\pi_{\tau}$ is optimal for the risk-sensitive scheduling problem with cost set at $C$.

Let $\partial J(n)$ denote the limit value of $\partial J_k$ obtained upon convergence of the RVI algorithm. It follows from Lemma~\ref{monoc} and the analysis of Theorem~\ref{th:2} that $\partial J(n)$ is a non-increasing function of the cost $C$ for each value of the system state $n$. Hence, the necessary and sufficient condition for $\pi_{\tau}$ to be optimal are 
\begin{align}
\partial J(\tau-1) &\leq 0, \mbox{ and },\label{cond:thrh2}\\
\partial J(\tau) &\geq 0, \label{cond:thrh1}
\end{align}

Since the function $\partial J(n)$ was shown to be non-increasing in $C$ for each $n$, it follows that the set of costs $C$ which satisfy the inequality~\eqref{cond:thrh2} is of the form $[C_l,\infty)$, while the solution set of inequality~\eqref{cond:thrh1} is of the form $[0,C_u]$, for some suitable values of $C_l,C_u\geq 0$. Since for a fixed cost $C$, the function $\partial J(n)$ is non-decreasing in $n$, it follows that $C_u\geq C_l$, and hence $\pi_\tau$ is optimal when $C\in [C_l,C_u]$.
\begin{theorem}\label{th:3}
Consider the class comprising of optimal risk-sensitive control problems parametrized by transmission cost $C$, in which for each individual risk-sensitive MDP the cost incurred is given by~\eqref{rsmdp}. Then, the threshold policy $\pi_{\tau}$ is optimal for risk-sensitive MDPs for which the cost $C\in[C_l,C_u]$, where $C_l,C_u$ can be obtained by solving the equations~\eqref{cond:thrh2},~\eqref{cond:thrh1}. 
\end{theorem}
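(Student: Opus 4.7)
The plan is to combine the closed-form solution of the fixed-point equations~\eqref{recur}--\eqref{recur3} for the threshold policy $\pi_\tau$ with the two monotonicity properties of the cost differential $\partial J(n)$ already in hand: monotonicity in $n$ from Lemma~\ref{lemma3} and monotonicity in $C$ from Lemma~\ref{monoc}.

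First I would formalize the equivalence between optimality of $\pi_\tau$ and the sign conditions~\eqref{cond:thrh2}--\eqref{cond:thrh1}. Since the RVI iterates converge, $V_k\to V$ and therefore $\partial J_k\to \partial J$, and the greedy action with respect to $V$ realizes a threshold at $\tau$ precisely when $\partial J(n)\leq 0$ for $n<\tau$ and $\partial J(n)\geq 0$ for $n\geq \tau$. By Lemma~\ref{lemma3}, $\partial J$ is non-decreasing in $n$, so the entire sign pattern is already controlled by the two boundary inequalities at $n=\tau-1$ and $n=\tau$, which is precisely what~\eqref{cond:thrh2}--\eqref{cond:thrh1} assert.

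Next, I would substitute the closed-form expression $V(i)=\alpha^i$ for $i\in[0,\tau]$ from~\eqref{v1} and the two-term formula~\eqref{v2} (with $K_1,K_2$ given by~\eqref{k1}--\eqref{k2}, and $\alpha$ pinned down by the boundary equation~\eqref{eq:alpha}) into the definition~\eqref{df} evaluated at $n=\tau-1$ and $n=\tau$. This turns~\eqref{cond:thrh2} and~\eqref{cond:thrh1} into two explicit (transcendental) inequalities in the single scalar unknown $C$, and their joint solution set is the desired range of costs.

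To conclude that this solution set is an interval, I would invoke the monotonicity in $C$. Lemma~\ref{monoc} establishes that each iterate $\partial J_k^C(n)$ is non-increasing in $C$; passing to the limit along the convergent RVI iterates preserves the weak inequality, so the limiting $\partial J^C(n)$ is itself non-increasing in $C$. Hence the solution set of~\eqref{cond:thrh2} has the form $[C_l,\infty)$ and that of~\eqref{cond:thrh1} has the form $[0,C_u]$. At $C=C_l$ the defining equality $\partial J^{C_l}(\tau-1)=0$ combined with monotonicity in $n$ yields $\partial J^{C_l}(\tau)\geq 0$, so $C_l\in[0,C_u]$, and thus $[C_l,C_u]$ is a non-empty interval on which $\pi_\tau$ is optimal, with endpoints recovered by treating~\eqref{cond:thrh2} and~\eqref{cond:thrh1} as equalities.

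The main technical obstacle will be the limit-passage step for the monotonicity in $C$, since Lemma~\ref{monoc} is stated only for the finite-iterate objects $\partial J_k^C(n)$. Justifying that the RVI convergence is regular enough (for instance, uniform on bounded intervals of $C$) to preserve the inequality in the limit, and showing continuity of the implicitly defined $\alpha(C)$ arising from~\eqref{eq:alpha}, so that the endpoints $C_l$ and $C_u$ are actually attained and given by the stated equations, are the two pieces requiring the most care.
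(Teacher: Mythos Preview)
Your proposal is correct and follows essentially the same route as the paper: reduce optimality of $\pi_\tau$ to the two sign conditions \eqref{cond:thrh2}--\eqref{cond:thrh1} via monotonicity of $\partial J$ in $n$, use monotonicity of $\partial J$ in $C$ (Lemma~\ref{monoc} passed to the limit) to obtain the half-line solution sets, compare them using monotonicity in $n$ to get $C_l\leq C_u$, and finally substitute the closed-form $V$ from \eqref{v1}--\eqref{v2} to make the endpoint equations explicit. If anything, you are more careful than the paper, which simply asserts the limit passage and the interval structure in the paragraph preceding the theorem without isolating the regularity issues you flag.
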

Let us now re-write the equation~\eqref{cond:thrh2},~\eqref{cond:thrh1} in terms of parameters $\lambda_1,\lambda_2,p,\gamma$. 
The quantities $C_l,C_u$ can be obtained by substituting the values of $\partial J$ into the above conditions. 

Similar to the relations~\eqref{df}, the steady-state differentials $\partial J(n)$ are calculated as,
\begin{align}\label{df2}
\partial J(n)=
\begin{cases}
w(n+1) - e^{\gamma C} \left[pV(n)+(1-p)V(n+1)\right],\\
\mbox{ if }n=0,\\
w(n+1)\left[1-(1-p)e^{\gamma C}\right] - pe^{\gamma(C-R)}V(n-1),\\
\mbox{ if } n\in[1,B-1],\\
V(n)-e^{\gamma C} \left[pe^{-\gamma R}V(n-1)+(1-p)e^{\gamma L}V(n)\right],\\
\mbox{ if } n=B.
\end{cases}
\end{align}
The value function $V$ can be substituted from~\eqref{v1} and~\eqref{v2} into the above relation, and thereafter the resulting $\partial J$ can be substituted into the inequalities~\eqref{cond:thrh2},~\eqref{cond:thrh1} in order to yield the desired equations. In summary, the solution of two equations~\eqref{cond:thrh2},~\eqref{cond:thrh1} solves a set of risk-sensitive optimal control problems parameterized by the transmission cost $C$.

%Next, our focus will be to obtain the value $C_u$, since this would yield us the Whittle's index for state $n$. Since $\partial J(\tau)$ is decreasing function of $C$, in order to obtain $C_u$, we set $\partial J(\tau)$ to $0$. Thus,
%\begin{align*}
%w(\tau+1)\left[1-(1-p)e^{\gamma C}\right] - pe^{\gamma(C-R)}w(\tau-1) = 0.
%\end{align*}
%Substituting $w(\tau-1)=\alpha^{\tau-1}$ and $w(\tau+1) = K_1\lambda_1^2+K_2\lambda_2^2$, the condition becomes,
%\begin{align*}
%\left(K_1\lambda_1^2+K_2\lambda_2^2\right)\left[1-(1-p)e^{\gamma C} \right] -pe^{\gamma (C-R)} \alpha^{\tau-1}=0
%\end{align*}
%\section{Scheduling for Multiple Clients}

\section{Conclusion and Future Works}\label{conclu}
We have derived the optimal risk-sensitive scheduling policy for a single client being served by a wireless channel. The otimal policy was shown to have a threshold structure, and hence is easily implementable. Furthermore we showed that the threshold increases with packet transmission cost, and hence the policy with threshold set at $\tau$ is optimal when the transmission cost lies within the interval $[C_l,C_u]$. The quantities $C_l,C_u$ can be derived by solving two equations. We plan to extend the analysis to the case where multiple clients share a single wireless channel, and the AP has to prioritize the clients for packet transmissions, based on their queue lengths. We would also like to consider the scenario where the transmitter can choose to transmit from amongst various power levels, where a transmission involving higher power having a higher service rate. 

 \bibliographystyle{IEEEtran}
\bibliography{combinedbib}

% Generated by IEEEtran.bst, version: 1.14 (2015/08/26)
\begin{thebibliography}{10}
\providecommand{\url}[1]{#1}
\csname url@samestyle\endcsname
\providecommand{\newblock}{\relax}
\providecommand{\bibinfo}[2]{#2}
\providecommand{\BIBentrySTDinterwordspacing}{\spaceskip=0pt\relax}
\providecommand{\BIBentryALTinterwordstretchfactor}{4}
\providecommand{\BIBentryALTinterwordspacing}{\spaceskip=\fontdimen2\font plus
\BIBentryALTinterwordstretchfactor\fontdimen3\font minus
  \fontdimen4\font\relax}
\providecommand{\BIBforeignlanguage}[2]{{%
\expandafter\ifx\csname l@#1\endcsname\relax
\typeout{** WARNING: IEEEtran.bst: No hyphenation pattern has been}%
\typeout{** loaded for the language `#1'. Using the pattern for}%
\typeout{** the default language instead.}%
\else
\language=\csname l@#1\endcsname
\fi
#2}}
\providecommand{\BIBdecl}{\relax}
\BIBdecl

\bibitem{cardenas2008research}
A.~A. C{\'a}rdenas, S.~Amin, and S.~Sastry, ``Research challenges for the
  security of control systems.'' in \emph{HotSec}, 2008.

\bibitem{coraluppi1999risk}
S.~P. Coraluppi and S.~I. Marcus, ``Risk-sensitive and minimax control of
  discrete-time, finite-state markov decision processes,'' \emph{Automatica},
  vol.~35, no.~2, pp. 301--309, 1999.

\bibitem{James1992}
M.~James, ``Asymptotic analysis of nonlinear stochastic risk-sensitive control
  and differential games,'' \emph{Mathematics of Control, Signals and Systems},
  vol.~5, no.~4, pp. 401--417, 1992.

\bibitem{Jacobson1973}
D.~Jacobson, ``Optimal stochastic linear systems with exponential performance
  criteria and their relation to deterministic differential games,''
  \emph{Automatic Control, IEEE Transactions on}, vol.~18, no.~2, pp. 124--131,
  Apr 1973.

\bibitem{bensoussan1985optimal}
A.~Bensoussan and J.~Van~Schuppen, ``Optimal control of partially observable
  stochastic systems with an exponential-of-integral performance index,''
  \emph{SIAM Journal on Control and Optimization}, vol.~23, no.~4, pp.
  599--613, 1985.

\bibitem{james1994risk}
M.~R. James, J.~S. Baras, and R.~J. Elliott, ``Risk-sensitive control and
  dynamic games for partially observed discrete-time nonlinear systems,''
  \emph{IEEE transactions on automatic control}, vol.~39, no.~4, pp. 780--792,
  1994.

\bibitem{Marcus1997}
S.~I. Marcus, E.~Fernandez-Gaucherand, D.~Hernandez-Hernandez, S.~Coraluppi,
  and P.~Fard, ``Risk sensitive {M}arkov decision processes,'' in \emph{Systems
  and control in the 21st century}, 1997.

\bibitem{Fleming1997}
W.~Fleming and D.~Hernandez-Hernandez, ``Risk sensitive control of finite state
  machines on an infinite horizon. i,'' in \emph{Decision and Control, 1997.,
  Proceedings of the 36th IEEE Conference on}, vol.~4, Dec 1997, pp. 3407--3412
  vol.4.

\bibitem{fleming1995risk}
W.~H. Fleming and W.~M. McEneaney, ``Risk-sensitive control on an infinite time
  horizon,'' \emph{SIAM Journal on Control and Optimization}, vol.~33, no.~6,
  pp. 1881--1915, 1995.

\bibitem{kumar1981optimal}
P.~Kumar and J.~Van~Schuppen, ``On the optimal control of stochastic systems
  with an exponential-of-integral performance index,'' \emph{Journal of
  mathematical analysis and applications}, vol.~80, no.~2, pp. 312--332, 1981.

\bibitem{rahul}
R.~Singh, I.-H. Hou, and P.~Kumar, ``Fluctuation analysis of debt based
  policies for wireless networks with hard delay constraints,'' in \emph{IEEE
  INFOCOM, 2014 Proceedings}, April 2014, pp. 2400--2408.

\bibitem{rahul1}
{Rahul Singh, I-Hong Hou and P.R. Kumar}, ``Pathwise performance of debt based
  policies for wireless networks with hard delay constraints,'' in
  \emph{Decision and Control (CDC), 2013 IEEE 52nd Annual Conference on}, Dec
  2013, pp. 7838--7843.

\bibitem{guosingh}
\BIBentryALTinterwordspacing
X.~Guo, R.~Singh, P.~Kumar, and Z.~Niu, ``A high reliability asymptotic
  approach for packet inter-delivery time optimization in cyber-physical
  systems,'' in \emph{Proceedings of the 16th ACM International Symposium on
  Mobile Ad Hoc Networking and Computing}, ser. MobiHoc '15.\hskip 1em plus
  0.5em minus 0.4em\relax New York, NY, USA: ACM, 2015, pp. 197--206. [Online].
  Available: \url{http://doi.acm.org/10.1145/2746285.2746305}
\BIBentrySTDinterwordspacing

\bibitem{singh_stolyar_info}
R.~Singh and A.~Stolyar, ``Maxweight scheduling: "smoothness" of the service
  process,'' in \emph{IEEE INFOCOM 2016 - The 35th Annual IEEE International
  Conference on Computer Communications}, April 2016, pp. 1--9.

\bibitem{rs}
------, ``Maxweight scheduling: Asymptotic behavior of unscaled
  queue-differentials in heavy traffic,'' in \emph{Proceedings of the 2015 ACM
  SIGMETRICS International Conference on Measurement and Modeling of Computer
  Systems}, ser. SIGMETRICS '15.\hskip 1em plus 0.5em minus 0.4em\relax New
  York, NY, USA: ACM, 2015, pp. 431--432.

\bibitem{singh2017optimal}
R.~Singh and E.~Modiano, ``Optimal routing for delay-sensitive traffic in
  overlay networks,'' \emph{arXiv preprint arXiv:1703.07419}, 2017.

\bibitem{singh2016throughput}
R.~Singh and P.~Kumar, ``Throughput optimal decentralized scheduling of
  multi-hop networks with end-to-end deadline constraints: Unreliable links,''
  \emph{arXiv preprint arXiv:1606.01608}, 2016.

\bibitem{Howard1972}
R.~A. Howard and J.~E. Matheson, ``Risk-sensitive {M}arkov decision
  processes,'' \emph{Management Science}, vol.~18, no.~7, pp. pp. 356--369,
  1972.

\bibitem{whittle1990risk}
P.~Whittle, ``Risk-sensitive optimal control,'' 1990.

\bibitem{bompard2009risk}
E.~Bompard, C.~Gao, R.~Napoli, A.~Russo, M.~Masera, and A.~Stefanini, ``Risk
  assessment of malicious attacks against power systems,'' \emph{IEEE
  Transactions on Systems, Man, and Cybernetics-Part A: Systems and Humans},
  vol.~39, no.~5, pp. 1074--1085, 2009.

\bibitem{ericsson2007toward}
G.~N. Ericsson, ``Toward a framework for managing information security for an
  electric power utility?cigr{\'e} experiences,'' \emph{IEEE transactions on
  power delivery}, vol.~22, no.~3, pp. 1461--1469, 2007.

\bibitem{amin2009safe}
S.~Amin, A.~A. C{\'a}rdenas, and S.~S. Sastry, ``Safe and secure networked
  control systems under denial-of-service attacks,'' in \emph{International
  Workshop on Hybrid Systems: Computation and Control}.\hskip 1em plus 0.5em
  minus 0.4em\relax Springer, 2009, pp. 31--45.

\bibitem{befekadu2011risk}
G.~K. Befekadu, V.~Gupta, and P.~J. Antsaklis, ``Risk-sensitive control under a
  class of denial-of-service attack models,'' in \emph{American Control
  Conference (ACC), 2011}.\hskip 1em plus 0.5em minus 0.4em\relax IEEE, 2011,
  pp. 643--648.

\bibitem{singh2015index}
R.~Singh, X.~Guo, and P.~R. Kumar, ``Index policies for optimal mean-variance
  trade-off of inter-delivery times in real-time sensor networks,'' in
  \emph{Computer Communications (INFOCOM), 2015 IEEE Conference on}.\hskip 1em
  plus 0.5em minus 0.4em\relax IEEE, 2015, pp. 505--512.

\bibitem{1102957}
{Z. Rosberg, P. Varaiya and J. Walrand}, ``Optimal control of service in tandem
  queues,'' \emph{IEEE Transactions on Automatic Control}, vol.~27, no.~3, pp.
  600--610, Jun 1982.

\bibitem{buyukkoc1985c}
C.~Buyukkoc, P.~Variaya, and J.~Walrand, ``c mu rule revisited.'' \emph{Adv.
  Appl. Prob.}, vol.~17, no.~1, pp. 237--238, 1985.

\bibitem{lin1984optimal}
W.~Lin and P.~Kumar, ``Optimal control of a queueing system with two
  heterogeneous servers,'' \emph{IEEE Transactions on Automatic control},
  vol.~29, no.~8, pp. 696--703, 1984.

\bibitem{stidham1989monotonic}
S.~Stidham~Jr and R.~R. Weber, ``Monotonic and insensitive optimal policies for
  control of queues with undiscounted costs,'' \emph{Operations Research},
  vol.~37, no.~4, pp. 611--625, 1989.

\bibitem{george2001dynamic}
J.~M. George and J.~M. Harrison, ``Dynamic control of a queue with adjustable
  service rate,'' \emph{Operations Research}, vol.~49, no.~5, pp. 720--731,
  2001.

\bibitem{harisondiscount}
J.~M. Harrison, ``Dynamic scheduling of a multiclass queue: Discount
  optimality,'' \emph{Operations Research}, vol.~23, no.~2, pp. 270--282, 1975.

\bibitem{tassi1}
L.~Tassiulas and A.~Ephremides, ``Stability properties of constrained queueing
  systems and scheduling policies for maximum throughput in multihop radio
  networks,'' \emph{IEEE Transactions on Automatic Control}, vol.~37, no.~12,
  pp. 1936--1948, Dec 1992.

\bibitem{sennott2009stochastic}
L.~I. Sennott, \emph{Stochastic dynamic programming and the control of queueing
  systems}.\hskip 1em plus 0.5em minus 0.4em\relax John Wiley \& Sons, 2009,
  vol. 504.

\bibitem{borkar2002q}
V.~S. Borkar, ``Q-learning for risk-sensitive control,'' \emph{Mathematics of
  operations research}, vol.~27, no.~2, pp. 294--311, 2002.

\bibitem{Rojas1998}
A.~Brau-Rojas, R.~Cavazos-Cadena, and E.~Fernandez-Gaucherand, ``Controlled
  {M}arkov chains with risk-sensitive criteria: some (counter) examples,'' in
  \emph{Decision and Control, 1998. Proceedings of the 37th IEEE Conference
  on}, vol.~2, Dec 1998, pp. 1853--1858 vol.2.

\bibitem{thorisson1995coupling}
H.~Thorisson, ``Coupling methods in probability theory,'' \emph{Scandinavian
  journal of statistics}, pp. 159--182, 1995.

\end{thebibliography}
\end{document}